\documentclass[journal,draftcls,onecolumn,12pt,twoside]{IEEEtran}
\usepackage{amsmath, amsthm, amsfonts, amssymb, bm}
\usepackage{graphicx,subfig}
\usepackage{multirow}
\usepackage{cite}
\usepackage{enumerate}

\newtheorem{theorem}{\bf Theorem}
\newtheorem{prop}[theorem]{\bf Proposition}

\newcounter{definition}

\long\def\symbolfootnote[#1]#2{\begingroup
\def\thefootnote{\fnsymbol{footnote}}\footnote[#1]{#2}\endgroup}

\begin{document}
\title{Channel Hopping Sequences for Maximizing Rendezvous Diversity in Cognitive Radio Networks}

\author{Yijin Zhang,~\IEEEmembership{Member, IEEE}, Yuan-Hsun Lo, Wing Shing Wong,~\IEEEmembership{Fellow, IEEE}
\thanks{This work was supported by the National Natural Science Foundation of China under grant number 61301107 and the Shenzhen Knowledge Innovation Program JCYJ20130401172046453.}
\thanks{Y. Zhang is with the School of Electronic and Optical Engineering, Nanjing University of Science and Technology, Nanjing, China.
E-mail: yijin.zhang@gmail.com.}
\thanks{Y.-H. Lo is with the School of Mathematical Sciences, Xiamen University, Xiamen, China. E-mail: yhlo0830@gmail.com.}
\thanks{W. S. Wong is with the Department of Information Engineering, The Chinese University of Hong Kong, Shatin, N. T., Hong Kong. He is also a member of the Shenzhen Research Institute, The Chinese University of Hong Kong. E-mail: wswong@ie.cuhk.edu.hk.}
}

\maketitle

\begin{abstract}
In cognitive radio networks (CRNs), establishing a communication link between a pair of secondary users (SUs) requires them to rendezvous on a common channel which is not occupied by primary users (PUs).
Under time-varying PU traffic, asynchronous sequence-based channel hopping (CH) with the maximal rendezvous diversity is a representative technique to guarantee an upper bounded {\em time-to-rendezvous} (TTR) for delay-sensitive services in CRNs, without requiring global clock synchronization.
Maximum TTR (MTTR) and maximum conditional TTR (MCTTR) are two commonly considered metrics for evaluating such CH sequences, and minimizing these two metrics is the primary goal in the sequence design of various paper reported in the literature.
In this paper, to investigate the fundamental limits of these two metrics, we first derive lower bounds on the MCTTR and MTTR, and then propose an asymmetric design which has the minimum MCTTR and an improvement on MTTR than other previously known algorithms.
Moreover, when the number of licensed channels is odd, our proposed design achieves the minimum MTTR.
We also present the TTR performance of the proposed design via simulation.
\end{abstract}
\begin{IEEEkeywords}
Cognitive radio, maximum rendezvous diversity, time to rendezvous, asynchronous channel hopping
\end{IEEEkeywords}

\section{Introduction}
In order to improve the spectrum utilization of cognitive radio networks (CRNs), secondary users (SUs) are allowed to access the spectrum that is not occupied by primary users (PUs).
Once a pair of SUs simultaneously visit the same available channel, which is called {\em rendezvous}, they can establish a connection via standard communication protocols.
As rendezvous has a direct impact on the medium access delay of SUs, the {\em time-to-rendezvous} (TTR), which is defined precisely in Section II, is usually used for evaluating the performance of a rendezvous protocol.
However, since the channel availability is time-varying due to the presence of PU signals and it is difficult to maintain global time synchronization among SUs, how to guarantee an upper bounded TTR for delay-sensitive services in CRNs is a challenging problem.
Asynchronous sequence-based channel hopping (CH) with the maximal rendezvous diversity, which minimizes the risk of rendezvous failures due to the interference of PU signals, is a representative technique to address this issue~\cite{Shin10,Liu12,Lin13,Bian11,Bian13,Chang14,Gu13,Wu14,Chang15,Sheu16,ChangLiao16,Guerra15}.
Here, the rendezvous diversity is defined as the minimum number of distinct channels which a pair of CH sequences could simultaneously visit, and the maximal rendezvous diversity is equal to the total number of licensed channels.

To avoid an unreasonably long TTR due to the PUs blocking, the primary goal in the design of asynchronous CH sequences with the maximal rendezvous diversity is to minimize the following two metrics~\cite{Shin10,Liu12,Lin13,Bian11,Bian13,Chang14,Gu13,Wu14,Chang15,Sheu16,ChangLiao16,Guerra15}:
\begin{itemize}
  \item {\em Maximum TTR} (MTTR)---the maximum time for two CH sequences to rendezvous when all licensed channels are available for the two SUs.
  \item {\em Maximum Conditional TTR} (MCTTR)---the maximum time for two CH sequences to rendezvous assuming there is at least one available channel for the two SUs.
\end{itemize}
This paper continues the work to investigate asynchronous CH sequences with the maximal rendezvous diversity under this objective.
The follow-on tasks after initial rendezvous, such as channel contention procedure and data packet transmission, are outside the scope of this paper.
Obviously, MTTR and MCTTR provide upper bounds on the TTR for two extreme channel conditions, and hence have been commonly considered for those applications that have stringent worst-case TTR requirement.

Previously known work on asynchronous CH sequences with the maximal rendezvous diversity in the literature can be classified into two categories: asymmetric and symmetric.
Asymmetric approaches require each SU to have a preassigned role as either a sender or a receiver, and allow the sender and the receiver to use different approaches to generate their respective CH sequences, while symmetric approaches do not.
As summarized in~\cite{Chang15,Guerra15}, previously known asymmetric ones such as A-MOCH~\cite{Bian11}, ACH~\cite{Bian13}, ARCH~\cite{Chang14}, D-QCH~\cite{Sheu16}, and WFM~\cite{ChangLiao16} all produce smaller MCTTR than previously known symmetric ones, such as CRSEQ~\cite{Shin10}, JS~\cite{Liu12}, EJS~\cite{Lin13}, Sym-ACH~\cite{Bian13}, DRDS~\cite{Gu13}, HH~\cite{Wu14}, T-CH~\cite{Chang15} and S-QCH~\cite{Sheu16}.
However, to the authors' best knowledge, the following two fundamental problems have not been settled until now.
The answers of them determine whether the performance of the existing CH schemes can be further improved. 
\begin{enumerate}
  \item What is the minimum MCTTR and MTTR of asynchronous CH sequences with the maximal rendezvous diversity?
  We note there are some recent papers~\cite{Bian11,ChangLiao16,ChangLiao15} that made attempts to investigate this issue, but all are under additional constraints of sequence design.
  Bian et al. in~\cite{Bian11} provided a lower bound on MCTTR, however, the proof therein requires the assumption that the sequence period is equal to the MCTTR.
  Chang et al. in~\cite{ChangLiao16} and~\cite{ChangLiao15} derived lower bounds on MCTTR and MTTR, respectively, but only considered the CH sequences that simultaneously satisfy the independence assumption (a pair of CH sequences are independent) and the uniform channel loading assumption (an SU hops to a particular channel at a particular interval with an equal probability).
  \item How to design an algorithm with minimum MCTTR and MTTR? It was claimed in~\cite{Bian11,Bian13,Chang14,Sheu16,ChangLiao16} that A-MOCH, ACH, ARCH, D-QCH and WFM all produced the minimum MCTTR, but the minimum is established based on the lower bounds in~\cite{Bian11} and~\cite{ChangLiao16}.
\end{enumerate}

This paper presents analysis that addresses these issues.
Main results include the derivation of lower bounds on MCTTR and MTTR, as well as construct a new asymmetric algorithm: FARCH, which achieves performance surpassing other algorithms reported in the literature up to now.
We summarize our contributions in Table~\ref{table:summary}, in comparison with some recently proposed algorithms.
One can see that the FARCH algorithm produces minimum MCTTR as well as some previously proposed algorithms, and has an improvement on MTTR.
Moreover, the FARCH produces minimum MTTR when $N$ (the number of licensed channels) is odd, while previously known algorithms do not.
Our work can be seen as a generalization of~\cite{ChangLiao16} for the asynchronous case which focused on tight lower bounds under some constraints of sequence design.

In addition to the theoretical analysis of the CH schemes, we also perform numerical study in terms of a new metric--$\text{MTTR}_h$, to more comprehensively investigate the worst-case TTR performance under a variety of scenarios of opportunistic spectrum access; and perform simulations to evaluate the average TTR in a CRN.

\begin{table*}
\[
\begin{array}{|c|c|c|c|} \hline
 & & \text{MCTTR} & \text{MTTR} \\ \hline
\multirow{3}{1.5cm}{Our new results} & \text{Lower bound}  &\begin{array}{c} N^2 \text{ (Thm.~\ref{thm:MCTTR})} \end{array} & \begin{array}{c} N  \text{ (Thm.~\ref{thm:MTTR})} \end{array}  \\ \cline{2-4}
& \text{FARCH}  &\begin{array}{c} {N^2}^*  \text{ (Thm. 4)} \end{array} & \begin{array}{c} N^* \text{ for odd } N   \text{ (Thm. 6)};  \\ N+1 \text{ for even } N  \text{ (Thm. 5)} \end{array}  \\ \hline
\multirow{5}{1.5cm}{Previously known results}& \text{A-MOCH~\cite{Bian11} }  & {N^2}^* & N^2-N+1  \\ \cline{2-4}
& \text{ACH~\cite{Bian13}} & {N^2}^* & N^2-N+1 \\ \cline{2-4}
& \text{ARCH~\cite{Chang14}} & {N^2}^{**} & 2N-1 \text{ for even } N \\ \cline{2-4}
& \text{D-QCH~\cite{Sheu16}} & {N^2}^{*} & 2N-1  \\ \cline{2-4}
& \text{WFM~\cite{ChangLiao16}} & {N^2}^{*} & N+1  \\ \hline
\end{array}
\]
\caption{A comparison of asymmetric CH schemes with the maximal rendezvous diversity in an asynchronous CH system with $N$ licensed channels. ($^*$: optimal, $^{**}$: optimal but only defined for even $N$.)}
\label{table:summary}
\end{table*}

We organize the rest of the paper as follows.
We introduce the system model and relevant definitions in Section II.
Lower bounds for MCTTR and MTTR are established in Section III.
We state the proposed FARCH algorithm and evaluate its MCTTR and MTTR performance in Section IV.
In Section V, we discuss the worst-case rendezvous performance of different approaches under various scenarios of PU traffic, including
those considered in MTTR and MCTTR.
Simulation results on the TTR performance of the FARCH algorithm are presented in Section VI.
Finally, we provide concluding remarks in Section VII.

\section{System Model}
In this paper, an asynchronous CH system and relevant definitions are described as below, following the framework in~\cite{Shin10,Liu12,Lin13,Bian11,Bian13,Chang14,Gu13,Wu14,Chang15,ChangLiao16}.

We assume that all SUs and PUs under consideration are in the same geographical area, and all SUs share a known channel set of $N$ ($N\geq 2$) licensed channels, labeled as $0, 1, \ldots , N-1$.
To model the CH system, we assume all $N$ licensed frequency channels admit the same time-slotted structure with the same time slot duration.

In a sequence-based CH scheme, each SU is assigned a CH sequence, which determines the order of channel-visit.
All SUs visit channels by periodically reading CH sequences until a rendezvous occurs.
We assume that all CH sequences enjoy a common CH period $T$.

From practical considerations, it may be desirable to require each SU to perform sensing to remove unavailable channels from consideration in its CH sequences ~\cite{Chang15,Sheu16}.
However, in this paper we do not adopt this approach.
Instead, we follow the model in~\cite{Shin10,Liu12,Lin13,Bian11,Bian13,Chang14,ChangLiao16}, and assume that the SUs do not determine which channel is available by channel sensing.

Since there is no global clock synchronization, we assume the channel is only slot-synchronous, i.e., SUs know the slot boundaries, but they do not necessarily start their channel hopping sequences at the same global time.

Let $\mathbb{Z}_N$ denote the set $\{0,1,\ldots,N-1\}$.
We represent a CH sequence $\mathbf{u}$ of period $T$ as:
\[
\mathbf{u} = [u_0, u_1, \ldots, u_{T-1}],
\]
where $u_i$ is an element in $\mathbb{Z}_N$ for each $i$, and denotes the channel visited by $\mathbf{u}$ at the $(i+1)$-th slot of a CH period.

Let $\tau$ be a non-negative integer.
Given a CH sequence, $\mathbf{u}$, of period $T$, the \emph{cyclic shift} by $\tau$ slots is defined as
\[
\mathbf{u}^{\tau}:= [ u_{\tau}, u_{\tau+1}, \  \ldots \ u_{\tau+T-1} ],
\]
where the addition is taken modulo $T$.
In particular, $\mathbf{u}=\mathbf{u}^{0}$.

Two CH sequences $\mathbf{u}$ and $\mathbf{v}$ are said to be {\em distinct} if neither one is a shifted version of the other, i.e., $\mathbf{u}^{\tau} \neq \mathbf{v}$ and $\mathbf{v}^{\tau} \neq \mathbf{u}$ for any non-negative integer $\tau$.

Consider two CH sequences $\mathbf{u}$ and $\mathbf{v}$ of period $T$.
In this paper, we only assume $\mathbf{u}$ and $\mathbf{v}$ are slot-synchronous, so they may start at different global time.
Following the approach taken by other research groups, we tally the TTR from the beginning of a CH sequence after both users have started.
Since the sequences are not necessarily synchronized, in general there are two options to pick the starting point -- one can start counting from the beginning of $\mathbf{u}$ or $\mathbf{v}$.
Given a fixed relative shift position, the value of TTR can be different depending on which starting point is chosen.
The maximum TTR for that relative shift is taken to be the maximum of the two values.
To facilitate subsequent discussion, we will adopt the following convention: when the starting point is defined by first entry of $\mathbf{v}$, we will consider sequence $\mathbf{u}$'s clock is ahead of sequence $\mathbf{v}$'s by $\tau$ units, where $\tau \geq 0$.
Similarly, if the start point is defined by sequence $\mathbf{u}$, we will consider sequence $\mathbf{v}$'s clock is ahead.
If $\mathbf{u}$'s clock is ahead, a {\em rendezvous} occurs at the $(i+1)$-th slot of $\mathbf{v}$ if
\[
u_{\tau+i}=v_i=k
\]
for some $i$ and some available channel $k$.
In this case, channel $k$ is called a {\em rendezvous channel} and the $(i+1)$-th slot of $\mathbf{v}$ is called a {\em rendezvous slot}.
We refer to the smallest slot index of $\mathbf{v}$ in which there is a rendezvous, as the TTR between $\mathbf{u}$ and $\mathbf{v}$.

Let $\mathbf{v}_n$ be a subsequence of $\mathbf{v}$, which only consists of the first $n$ ($1 \leq n \leq T$) entries of $\mathbf{v}$.
Let $H_{\mathbf{u}^{\tau},\mathbf{v}_n}(k)$ denote the times that $\mathbf{u}$ and $\mathbf{v}$ simultaneously visit channel $k$ in the first $n$ slots of $\mathbf{v}$, when sequence $\mathbf{u}$'s clock is $\tau$ slots ahead of sequence $\mathbf{v}$'s clock, i.e.,
\[
H_{\mathbf{u}^{\tau},\mathbf{v}_n}(k):=|\{i\in\mathbb{Z}_{n}:\,u_{\tau+ i}=v_i=k\}|.
\]
Similarly, in the case that sequence $\mathbf{v}$'s clock is $\tau$ slots ahead of sequence $\mathbf{u}$'s clock, let $\mathbf{u}_m$ be the first $m$ ($1 \leq m \leq T$) entries of $\mathbf{u}$ and define $H_{\mathbf{v}^{\tau},\mathbf{u}_m}(k)$ as
\[
H_{\mathbf{v}^{\tau},\mathbf{u}_m}(k):=|\{i\in\mathbb{Z}_{m}:\,v_{\tau+ i}=u_i=k\}|.
\]

In particular, two CH sequences $\mathbf{u}$ and $\mathbf{v}$ are said to be {\em a pair of asynchronous CH sequences with the maximal rendezvous diversity}~\cite{Shin10,Liu12,Lin13,Bian11,Bian13,Chang14,Gu13,Wu14,Chang15} if and only if
\[
H_{\mathbf{u}^{\tau},\mathbf{v}}(k)\geq 1, \ \ H_{\mathbf{v}^{\tau'},\mathbf{u}}(k) \geq 1
\]
for any non-negative integers $\tau,\tau'$, and any $k\in\mathbb{Z}_N$.

\begin{figure}
\begin{center}
  \includegraphics[height=2in,width=4.5in]{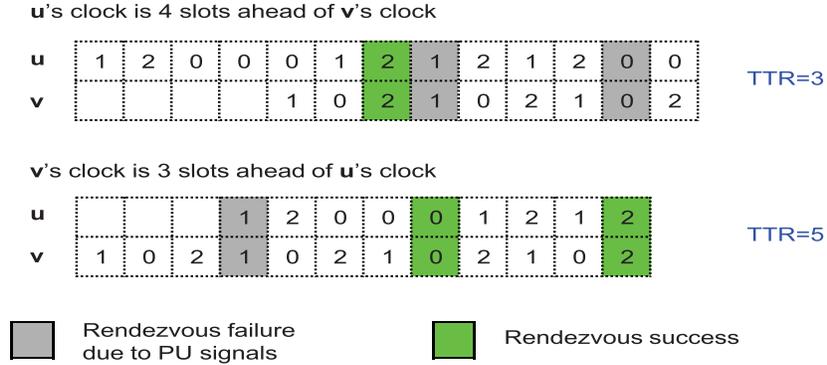}
\end{center}
\caption{An example of asynchronous CH sequences with the maximal rendezvous diversity, given that $N=3$, $\mathbf{u}=[1 2 0 0 0 1 2 1 2]$ and $\mathbf{v}=[1 0 2 1 0 2 1 0 2]$.}
\label{fig:example}
\end{figure}

We provide an example of asynchronous CH sequences with the maximal rendezvous diversity in Fig.~\ref{fig:example}.
Due to the time-varying channel availability and random relative shift among SUs, it is difficult to directly formulate TTR, as seen in Fig.~\ref{fig:example}.
Instead, we make the following formal definitions of MTTR and MCTTR, in order to study the TTR performance under two extreme scenarios as in~\cite{Shin10,Liu12,Lin13,Bian11,Bian13,Chang14,Gu13,Wu14,Chang15,ChangLiao16,Guerra15}.

For a pair of asynchronous CH sequences with the maximal rendezvous diversity, $\mathbf{u}$ and $\mathbf{v}$,
\begin{enumerate}
  \item define MTTR to be the maximum time for $\mathbf{u}$ and $\mathbf{v}$ to rendezvous under all possible clock differences when all $N$ licensed channels are available, that is, the smallest value of $\max\{m,n\}$ such that
  \[
  \sum_{k=0}^{N-1} H_{\mathbf{u}^{\tau},\mathbf{v}_n}(k) \geq 1,  \ \  \sum_{k=0}^{N-1} H_{\mathbf{v}^{\tau'},\mathbf{u}_m}(k) \geq 1
  \]
  for any non-negative integers $\tau,\tau'$;
  \item and define MCTTR to be the maximum time for $\mathbf{u}$ and $\mathbf{v}$ to rendezvous under all possible clock differences when at least one licensed channel is available, that is, the smallest value of $\max\{m,n\}$ such that
  \[
  H_{\mathbf{u}^{\tau},\mathbf{v}_n}(k)\geq 1, \ \ H_{\mathbf{v}^{\tau'},\mathbf{u}_m}(k) \geq 1
  \]
  for any non-negative integers $\tau,\tau'$ and any $k\in\mathbb{Z}_N$.
\end{enumerate}

Asynchronous CH sequences with the maximal rendezvous diversity can ensure that both MTTR and MCTTR are upper-bounded by the sequence period $T$.
Moreover, in our considered CH system, MCTTR exists only when the rendezvous diversity is maximal.

\section{Lower Bounds on MCTTR and MTTR}
Before deriving lower bounds on MCTTR and MTTR of asynchronous CH sequences with the maximal rendezvous diversity, we need the following useful proposition, which is a generalization of elementary cross-correlation properties of binary sequences~\cite{Sarwate80}.
\begin{prop}\label{prop:wwL}
Given two CH sequences $\mathbf{u},\mathbf{v}$ of period $T$ in a CRN with $N$ licensed channels.
For $k=0,1,\ldots,N-1$, let $x^{}_k,y^{}_k$ be the number of channel index $k$'s in $\mathbf{u}_m$, $\mathbf{v}$ ($1 \leq m \leq T$), respectively.
We have
\[
\sum_{\tau=0}^{T-1} H_{\mathbf{v}^{\tau},\mathbf{u}_m}(k) = x^{}_k y^{}_k
\]
for all $k$.
\end{prop}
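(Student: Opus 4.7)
The plan is to prove the identity by a double-counting argument, swapping the order of summation over $\tau$ and the index $i$ that appears in the definition of $H_{\mathbf{v}^{\tau},\mathbf{u}_m}(k)$.

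First I would unfold the definition:
\[
\sum_{\tau=0}^{T-1} H_{\mathbf{v}^{\tau},\mathbf{u}_m}(k) = \sum_{\tau=0}^{T-1} \bigl|\{i\in\mathbb{Z}_m : v_{\tau+i}=u_i=k\}\bigr| = \sum_{\tau=0}^{T-1}\sum_{i=0}^{m-1} \mathbf{1}[u_i=k]\,\mathbf{1}[v_{\tau+i}=k],
\]
where the indices on $v$ are taken modulo $T$ and $\mathbf{1}[\cdot]$ is the indicator function.

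Next I would interchange the two sums, obtaining
\[
\sum_{i=0}^{m-1} \mathbf{1}[u_i=k] \sum_{\tau=0}^{T-1} \mathbf{1}[v_{\tau+i}=k].
\]
The key observation is that, for each fixed $i$, the map $\tau\mapsto \tau+i \pmod{T}$ is a bijection of $\mathbb{Z}_T$ onto itself, so the inner sum counts the total number of $k$'s appearing in the full sequence $\mathbf{v}$, which is $y_k$ by definition, independent of $i$.

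Pulling $y_k$ out of the outer sum then gives $y_k \sum_{i=0}^{m-1} \mathbf{1}[u_i=k] = y_k\, x_k$, which is the claimed identity. I do not anticipate any real obstacle here; the statement is essentially the standard cross-correlation sum identity, and the only step that needs care is noting that the bijection argument requires $\mathbf{v}$ to have period $T$ so that $y_k$ really is the total count of $k$'s in every cyclic shift of $\mathbf{v}$. Since this periodicity is assumed in the hypothesis, the proof reduces to the straightforward manipulation above.
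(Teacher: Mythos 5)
Your proof is correct and follows essentially the same route as the paper's: both unfold $H_{\mathbf{v}^{\tau},\mathbf{u}_m}(k)$ as a double sum of indicator (Kronecker delta) products, swap the order of summation, and use the fact that summing over a full period of the cyclic shift yields $y_k$ independently of $i$. Your explicit remark about the bijection $\tau\mapsto\tau+i \pmod{T}$ makes the key step slightly more transparent than the paper's presentation, but the argument is the same.
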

\begin{proof}
Recall the definition of \emph{Kronecker's delta}: $\delta_{a,b}=1$ if $a=b$, and $\delta_{a,b}=0$ otherwise.
Then, $H_{\mathbf{v}^{\tau},\mathbf{u}_m}(k)$ can be written as
\begin{equation}\label{eq:rendezvous_number}
H_{\mathbf{v}^{\tau},\mathbf{u}_m}(k) = \sum_{i=0}^{m-1}\delta_{u_i,k}\delta_{v_{i+\tau},k}.
\end{equation}
By summing $H_{\mathbf{v}^{\tau},\mathbf{u}_m}(k)$ over all $\tau$ and exchanging the order of summation, we have
\begin{align*}
\sum_{\tau=0}^{T-1} H_{\mathbf{v}^{\tau},\mathbf{u}_m}(k) &=  \sum_{\tau=0}^{T-1} \sum_{i=0}^{m-1} \delta_{u_i,k}\delta_{v_{i+\tau},k}       \\
 &= \sum_{i=0}^{m-1} \delta_{u_i,k} \sum_{\tau=0}^{T-1} \delta_{v_{i+\tau},k} \\
 &= \sum_{i=0}^{m-1} \delta_{u_i,k} \sum_{\tau=0}^{T-1} \delta_{v_{i+\tau},k} \\
 &= \sum_{i=0}^{m-1} \delta_{u_i,k} \sum_{\tau=0}^{T-1} \delta_{v_\tau,k} = x^{}_k y^{}_k .
\end{align*}
\end{proof}

\subsection{A Lower Bound on MCTTR}
The primary goal of all known asynchronous CH sequences designs with the maximal rendezvous diversity in the literature is to minimize the MCTTR time.
Bian et al. in~\cite[Thm. 5]{Bian11} claimed that MCTTR in an asynchronous CH system is lower-bounded by $N^2$, however, the proof therein assumed that the period of CH sequences with the maximal rendezvous diversity is equivalent to the MCTTR.
Obviously, this assumption may not be true in general.
For example, if there exist two asynchronous CH sequences which always simultaneously visit each channel more than once within one sequence period, the MCTTR must be smaller than the sequence period. One can see that $MCTTR=T-1$ in the following example for $N=2$.
\begin{align*}
\mathbf{u} &= [0, 0, 1, 1, 0, 0, 1, 1] \\
\mathbf{v} &= [0, 0, 0, 0, 1, 1, 1, 1]
\end{align*}
Chang et al. also obtained $MCTTR \geq N^2$ in~\cite[Thm. 16]{ChangLiao16} when a pair of CH sequences are independent and each channel is selected with an equal probability at any time instant.
However, they did not investigate whether $MCTTR \geq N^2$ when a pair of sequences did not visit all channels with equal frequency.
Here, we provide a new approach to prove $MCTTR \geq N^2$ for all asynchronous CH sequences with the maximal rendezvous diversity, without any technical constraint of sequence design.

\begin{theorem}\label{thm:MCTTR}
For asynchronous CH sequences with the maximal rendezvous diversity in an asynchronous CH system with $N$ ($N\geq 2$) licensed channels, we have:
\begin{enumerate}
  \item $\text{MCTTR} \geq N^2$; and
  \item $\text{MCTTR} = N^2$ only if a pair of SUs employ distinct CH sequences, and both sequences possess the property that all channels are visited with uniform frequency.
\end{enumerate}
\end{theorem}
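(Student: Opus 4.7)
My plan is to leverage Proposition~\ref{prop:wwL} together with a Cauchy--Schwarz argument. Suppose MCTTR $= \max\{m,n\}$ is attained by values $m,n$ satisfying $H_{\mathbf{v}^{\tau},\mathbf{u}_m}(k)\ge 1$ and $H_{\mathbf{u}^{\tau'},\mathbf{v}_n}(k)\ge 1$ for every $\tau,\tau'\in\{0,1,\ldots,T-1\}$ and every $k\in\mathbb{Z}_N$. Let $x_k$ be the number of occurrences of channel $k$ in $\mathbf{u}_m$ and $y_k$ the number of occurrences of $k$ in $\mathbf{v}$ (full period $T$). Since each of the $T$ summands in Proposition~\ref{prop:wwL} is at least $1$, I obtain the key pointwise inequality
\[
x_k y_k \;=\; \sum_{\tau=0}^{T-1} H_{\mathbf{v}^{\tau},\mathbf{u}_m}(k) \;\geq\; T, \qquad \text{for every } k\in\mathbb{Z}_N.
\]

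Next I would apply the Cauchy--Schwarz inequality to the vectors $(\sqrt{x_k})_k$ and $(\sqrt{y_k})_k$, noting $\sum_k x_k = m$ and $\sum_k y_k = T$:
\[
mT \;=\; \Bigl(\sum_k x_k\Bigr)\Bigl(\sum_k y_k\Bigr) \;\geq\; \Bigl(\sum_k \sqrt{x_k y_k}\Bigr)^2 \;\geq\; \bigl(N\sqrt{T}\bigr)^2 \;=\; N^2 T,
\]
where the last inequality uses $\sqrt{x_k y_k}\geq \sqrt{T}$ for each of the $N$ channels. This yields $m\geq N^2$, and by the symmetric argument applied to the other constraint also $n\geq N^2$. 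Hence $\mathrm{MCTTR} = \max\{m,n\}\geq N^2$, proving part~(i).

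For part~(ii) I would trace back the equality cases. Achieving $\max\{m,n\}=N^2$ forces $m=n=N^2$ and equality throughout the chain above, on both sides. Equality in $\sqrt{x_k y_k}\ge \sqrt{T}$ for every $k$ gives $x_k y_k = T$ for all $k$, while equality in Cauchy--Schwarz forces $x_k/y_k$ to be constant; together these make both $x_k$ and $y_k$ constant, so $y_k=T/N$ and $x_k=N$ for every $k$. The same analysis applied in the other direction yields that $\mathbf{u}$ also visits every channel exactly $T/N$ times, establishing the uniform-frequency conclusion. Finally, to rule out non-distinct sequences, I observe that equality in the pointwise bound $x_k y_k\ge T$ combined with the $T$ nonnegative-integer summands each being $\ge 1$ forces $H_{\mathbf{v}^{\tau},\mathbf{u}_m}(k)=1$ for every $\tau$ and every $k$. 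If $\mathbf{u}=\mathbf{v}^{\tau_0}$ for some $\tau_0$, then at the shift $\tau=\tau_0$ we have $v_{\tau_0+i}=u_i$ for all $i$, so $H_{\mathbf{v}^{\tau_0},\mathbf{u}_m}(k)=x_k=N\geq 2$, contradicting the forced value $1$. The main obstacle I anticipate is not the inequality itself but keeping the bookkeeping clean when pushing every inequality to equality simultaneously, especially reconciling the two constraints (the $m$-side and the $n$-side) and the distinctness step.
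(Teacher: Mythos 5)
Your proof is correct and follows essentially the same route as the paper: Proposition~\ref{prop:wwL} gives the pointwise bound $x_k y_k \geq T$, Cauchy--Schwarz converts it into $\sum_k x_k \geq N^2$, and the equality analysis (each $H_{\mathbf{v}^{\tau},\mathbf{u}_m}(k)$ forced to equal $1$, hence distinctness; $y_k$ forced constant, hence uniform frequency, with the roles of $\mathbf{u}$ and $\mathbf{v}$ then swapped) matches the paper's. The only cosmetic difference is that you apply Cauchy--Schwarz to $(\sqrt{x_k})_k$ and $(\sqrt{y_k})_k$ and then use $\sqrt{x_k y_k}\geq\sqrt{T}$, whereas the paper first writes $x_k \geq T/y_k$ and applies Cauchy--Schwarz to $(1/\sqrt{y_k})_k$ and $(\sqrt{y_k})_k$; the two rearrangements are equivalent.
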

\begin{proof}
Consider a pair of asynchronous CH sequences, $\mathbf{u}$ and $\mathbf{v}$, of period $T$ with the maximal rendezvous diversity.
Let $l$ be the MCTTR between $\mathbf{u}$ and $\mathbf{v}$.

Let us consider the case that $\mathbf{v}$'s clock is $\tau$ slots ahead of sequence $\mathbf{u}$'s clock.
It is obvious that $1 \leq l \leq T$.
By the definition of MCTTR, it is required that
\begin{equation}\label{eq:h1}
H_{\mathbf{v}^{\tau}, \mathbf{u}_l}(k) \geq 1,
\end{equation}
for any non-negative integer $\tau$ and any $k\in\mathbb{Z}_N$.
Let $x_k$ and $y_k$ denote the number of $k$'s in $\mathbf{u}_l$ and $\mathbf{v}$, respectively.
By Proposition~\ref{prop:wwL}, it follows that
\begin{equation}\label{eq:h11}
\sum_{\tau=0}^{T-1} H_{\mathbf{v}^{\tau}, \mathbf{u}_l}(k) = x_k y_k.
\end{equation}
Combining~\eqref{eq:h1} and~\eqref{eq:h11} yields:
\begin{equation}\label{eq:xj'}
x_k \geq  \frac{T}{y_k} =\frac{1}{y_k}\sum_{h=0}^{N-1} y_h
\end{equation}
for any $k\in\mathbb{Z}_N$.
Then, after the summation on~\eqref{eq:xj'} from $k=0$ up to $N-1$, we have
\begin{align*}
l=\sum_{k=0}^{N-1} x_k  &\stackrel{(*)}{\geq}   \sum_{k=0}^{N-1} \Big( \frac{1}{y_k} \sum_{h=0}^{N-1} y_h\Big) = \Big(\sum_{k=0}^{N-1} \frac{1}{y_k}\Big) \Big(\sum_{h=0}^{N-1} y_h\Big)\\
&\stackrel{(**)}{\geq} \Big(\sum_{k=0}^{N-1} \sqrt{\frac{1}{y_k}} \cdot \sqrt{y_k} \Big)^2 = N^2,
\end{align*}
where $(**)$ is due to the Cauchy-Schwarz inequality.
This completes the proof of (i).

From $(**)$, we know that $l=N^2$ only if $y_1=y_2=\cdots=y_N$, i.e., $\mathbf{v}$ visits all channels with uniform frequency.
Consider that $\mathbf{u}$'s clock is $\tau$ slots ahead of sequence $\mathbf{v}$'s clock.
By reversing the role of $\mathbf{u},\mathbf{v}$ in the above proof, we also obtain that $l=N^2$ only if $\mathbf{u}$ visits all channels with uniform frequency.
Therefore, both sequences possess the property that all channels are visited with uniform frequency.

On the other hand, from $(*)$, we know $l=N^2$ only if $H_{\mathbf{v}^{\tau},\mathbf{u}_l}(k)=1$ for any $\tau\in\mathbb{Z}_{T}$ and any $k\in\mathbb{Z}_N$.
This implies that $\mathbf{u} \neq \mathbf{v}^{\tau}$ for any $\tau\in\mathbb{Z}_{T}$.
Otherwise, if $\mathbf{u} = \mathbf{v}^{\tau^*}$ for some $\tau^* \in\mathbb{Z}_{T}$, we have
\[
\sum_{k=0}^{N-1} H_{\mathbf{v}^{\tau^*},\mathbf{u}_l}(k) = \sum_{k=0}^{N-1} x_k = l =N^2
\]
which contradicts $\sum_{k=0}^{N-1}  H_{\mathbf{v}^{\tau},\mathbf{u}_l}(k)=N$ for any $\tau\in\mathbb{Z}_{T}$ as $N \geq 2$.
Similarly, by assuming that $\mathbf{u}$'s clock is $\tau$ slots ahead of sequence $\mathbf{v}$'s clock, we can obtain $\mathbf{v} \neq \mathbf{u}^{\tau}$ for any $\tau\in\mathbb{Z}_{T}$.
Therefore, $\mathbf{u}$ and $\mathbf{v}$ must be distinct if $l=N^2$.
\end{proof}

\noindent
\emph{Remark 1:} According to Theorem~\ref{thm:MCTTR}, we prove that A-MOCH~\cite{Bian11}, ACH~\cite{Bian13}, ARCH~\cite{Chang14} (only defined when $N$ is even), D-QCH~\cite{Sheu16} and WFM~\cite{ChangLiao16} algorithms all produce the minimum MCTTR in an asynchronous CH system.

\noindent
\emph{Remark 2:} Since MCTTR of asynchronous CH sequences with the maximal rendezvous diversity is always less than or equal to the sequence period, Theorem~\ref{thm:MCTTR} provides an alternative proof to the result in~\cite{Bian13} stating that the minimum period of such CH sequences is at least $N^2$.

\subsection{A Lower Bound on MTTR}
We are also interested in the minimum MTTR one can achieve.
We note that Chang et al. obtained $MTTR \geq N$ in~\cite[Thm. 1]{ChangLiao15} when a pair of CH sequences are independent and each channel is selected with an equal probability at any time instant.
The following presents a lower bound for all possible deterministic CH sequences in a minimum MCTTR system.
\begin{theorem}\label{thm:MTTR}
For asynchronous CH sequences with the maximal rendezvous diversity in an asynchronous CH system with $N$ ($N\geq 2$) licensed channels, if $\text{MCTTR}=N^2$, then $\text{MTTR}\geq N$.
\end{theorem}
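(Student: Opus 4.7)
The plan is to combine the structural consequences of Theorem~\ref{thm:MCTTR} with a double-counting identity in the style of Proposition~\ref{prop:wwL}. By Theorem~\ref{thm:MCTTR}(ii), the hypothesis $\text{MCTTR}=N^2$ forces both $\mathbf{u}$ and $\mathbf{v}$ to visit every channel with uniform frequency; writing $T$ for the common period, this means that each channel appears exactly $T/N$ times in $\mathbf{v}$, and similarly in $\mathbf{u}$. This uniformity is the lever that turns the ``at least one rendezvous'' requirement into a clean lower bound on the prefix length.

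Let $l=\text{MTTR}$ and consider, say, the case that $\mathbf{v}$'s clock is $\tau$ slots ahead of $\mathbf{u}$'s. By definition of MTTR, for \emph{every} non-negative integer $\tau$ at least one rendezvous (on any channel) must occur within the first $l$ slots of $\mathbf{u}$:
\[
\sum_{k=0}^{N-1} H_{\mathbf{v}^{\tau},\mathbf{u}_l}(k)\ \geq\ 1.
\]
Summing this inequality over one full period $\tau=0,1,\dots,T-1$ gives a total of at least $T$ on the left-hand side.

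Next I would evaluate the same double sum from the other direction using Proposition~\ref{prop:wwL}. Letting $x_k$ be the number of occurrences of channel $k$ in $\mathbf{u}_l$ and $y_k=T/N$ the number in $\mathbf{v}$, the proposition yields
\[
\sum_{\tau=0}^{T-1}\sum_{k=0}^{N-1} H_{\mathbf{v}^{\tau},\mathbf{u}_l}(k)\ =\ \sum_{k=0}^{N-1} x_k y_k\ =\ \frac{T}{N}\sum_{k=0}^{N-1} x_k\ =\ \frac{T\, l}{N},
\]
since $\sum_k x_k=l$. Combining this with the previous paragraph's inequality, $T\,l/N\ \geq\ T$, i.e.\ $l\ \geq\ N$, which is the claim. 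The symmetric case (with $\mathbf{u}$'s clock ahead) runs identically after swapping the roles of $\mathbf{u}$ and $\mathbf{v}$, and taking the maximum preserves the bound.

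There is no real obstacle: once one recognises that Theorem~\ref{thm:MCTTR} delivers the uniform-frequency property ``for free,'' the rest is the same averaging trick used in Section~III, and the only care needed is to invoke the correct direction of the MTTR definition (and note symmetry) so that the summation over $\tau$ is legitimate. The sharpness of the bound is left to Theorem~6, where the FARCH construction for odd $N$ is shown to attain it.
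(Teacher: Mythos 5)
Your proof is correct and follows essentially the same route as the paper's: sum the per-shift rendezvous inequality $\sum_k H_{\mathbf{v}^{\tau},\mathbf{u}_l}(k)\geq 1$ over $\tau=0,\dots,T-1$, evaluate the resulting double sum via Proposition~\ref{prop:wwL} as $\sum_k x_k y_k$, and use the uniform-frequency consequence of $\text{MCTTR}=N^2$ to reduce this to $l\geq N$. The only cosmetic difference is that you invoke Theorem~\ref{thm:MCTTR}(ii) directly for $y_0=\cdots=y_{N-1}=T/N$, whereas the paper cites the corresponding step inside that theorem's proof.
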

\begin{proof}
Let $\mathbf{u}$ and $\mathbf{v}$ be a pair of asynchronous CH sequences of period $T$ with the maximal rendezvous diversity.
Let $l$ be the MTTR between $\mathbf{u}$ and $\mathbf{v}$.
Obviously, $1 \leq l\leq T$.
By the definition of MTTR, it is required that for any non-negative integer $\tau$,
\begin{equation}\label{eq:MTTR_1}
\sum_{k=0}^{N-1} H_{\mathbf{v}^{\tau},\mathbf{u}_l}(k) \geq 1.
\end{equation}
In addition, by Proposition~\ref{prop:wwL}, we have
\begin{equation}\label{eq:MTTR_2}
\sum_{\tau=0}^{T-1} \sum_{k=0}^{N-1} H_{\mathbf{v}^{\tau},\mathbf{u}_l}(k) = \sum_{k=0}^{N-1} \sum_{\tau=0}^{T-1} H_{\mathbf{v}^{\tau},\mathbf{u}_l}(k) = \sum_{k=0}^{N-1} x_k y_k,
\end{equation}
where $x_k$ and $y_k$ denote the number of $k$'s in $\mathbf{u}_l$ and $\mathbf{v}$, respectively.
Combining \eqref{eq:MTTR_1} and \eqref{eq:MTTR_2} yields
\begin{equation}\label{eq:MTTR_3}
\sum_{k=0}^{N-1} x_k y_k \geq T = \sum_{k=0}^{N-1} y_k.
\end{equation}

As shown in the proof of Theorem~\ref{thm:MCTTR}, we know $y^{}_1=y^{}_2=\cdots=y^{}_N$ when ${MCTTR}=N^2$.
Then, the inequality~\eqref{eq:MTTR_3} can be simplified as
\[
\sum_{k=0}^{N-1} x_k \geq N.
\]
Hence the result follows due to $l=\sum_{k=0}^{N-1} x_k$.
\end{proof}

As shown in Table I, A-MOCH~\cite{Bian11} and ACH~\cite{Bian13} both have $MTTR=N^2-N+1$, ARCH~\cite{Chang14} (only for even $N$ cases) and D-QCH~\cite{Sheu16} both have $MTTR=2N-1$, and WFM~\cite{ChangLiao16} has $MTTR=N+1$.
It should be noted that Sheu et al. in~\cite{Sheu16} claimed the D-QCH algorithm can achieve $MTTR=N$, and Chang et al. in~\cite{ChangLiao16} claimed the WFM algorithm can achieve $MTTR=N$, too.
However, we find that the D-QCH and WFM have $MTTR=2N-1$ and $MTTR=N+1$, respectively, in some cases for each $N$, as shown in Fig.~\ref{fig:WFM}.
To the authors' best knowledge, there are no previously known CH sequences in the literature with ${MTTR}= N$ and ${MCTTR}= N^2$ for an asynchronous CH system.
In the next section, we will show that such CH sequences exist by providing concrete construction.

\begin{figure}
\begin{center}
  \includegraphics[height=1.5in,width=4.5in]{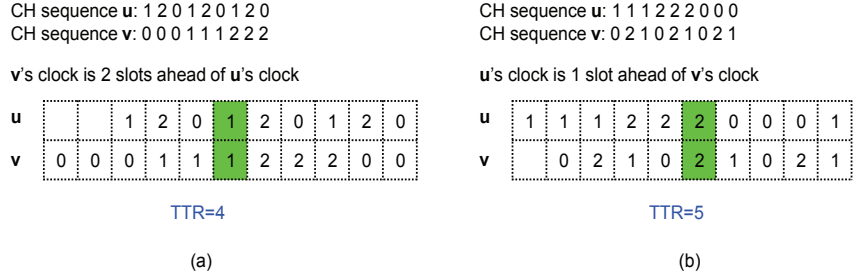}
\end{center}
\caption{(a) An example showing $MTTR=N+1$ (not $N$) for the WFM scheme~\cite{ChangLiao16}; (b) an example showing $MTTR=2N-1$ (not $N$) for the D-QCH scheme~\cite{Sheu16}.}
\label{fig:WFM}
\end{figure}

\section{The Proposed Algorithm: FARCH}
In this section, we design a fast rendezvous channel-hopping algorithm, called FARCH, for asynchronous CH systems.
FARCH is an asymmetric design, and hence needs to assign the sender and receiver different approaches to generate their respective CH sequences.

In what follows, $\mathbf{s}$ and $\mathbf{r}$ are used to denote the sender sequence and receiver sequence, respectively.

\subsection{Algorithm Description}
Here is the description of the FARCH algorithm, which produces a pair of $\mathbf{s},\mathbf{r}$ with a common period $N^2$.

\noindent
(i) \textbf{The Sender Sequence Construction.}

First randomly select a permutation on $\{0, 1, \ldots ,N-1\}$, denoted as $\mathbf{w} = [w^{}_0, w^{}_1, \ldots, w^{}_{N-1}]$ by its \emph{one-line notation}.
Then, $\mathbf{s}$ is set to be the sequence by repeating $\mathbf{w}$ $N$ times.
For illustration,
\[
\mathbf{s}:=[\underbrace{\ \mathbf{w},\ \mathbf{w},\ \ldots\ ,\ \mathbf{w}}_{N\text{ repeats}}\ ].
\]

\noindent
(ii) \textbf{The Receiver Sequence Construction.}

$\mathbf{r}$ is constructed based on the selected permutation $\mathbf{w}$ in the construction of $\mathbf{s}$.

(ii.a) If $N$ is even, then $\mathbf{r}$ is constructed in the way: $r^{}_{i\cdot N + j}=w_i$, for $i,j\in\mathbb{Z}_N$.
That is,
\[
\mathbf{r}:=[\underbrace{w^{}_0,\ldots,w^{}_0}_{N\text{ repeats}},\ \underbrace{w^{}_1,\ldots,w^{}_1}_{N\text{ repeats}},\ldots,\ \underbrace{w^{}_{N-1},\ldots,w^{}_{N-1}}_{N\text{ repeats}}].
\]

(ii.b) If $N$ is odd, then $\mathbf{r}$ is constructed in the way: $r_0=w^{}_0, r_1=w^{}_{N-1}$, followed by $N$ repeats of ``$w^{}_{N-2}$, $w^{}_{N-3}$, $\ldots$, $w^{}_{1}$'' and $N-1$ repeats of ``$w^{}_0,w^{}_{N-1}$''.
For illustration,
\[
\mathbf{r}:=[\mathbf{w}'',\ \underbrace{\mathbf{w}',\mathbf{w}',\ldots,\mathbf{w}'}_{N\text{ repeats}},\ \underbrace{\mathbf{w}'',\mathbf{w}'',\ldots,\mathbf{w}''}_{N-1\text{ repeats}} ],
\]
where $\mathbf{w}'=w^{}_{N-2},w^{}_{N-3},\ldots,w^{}_{1}$ and $\mathbf{w}''=w^{}_{0},w^{}_{N-1}$.

\noindent
{\em Example 1:}
When $N=4$ and $\mathbf{w}=[0,3,2,1]$, FARCH produces the following pair of CH sequences:
\begin{align*}
\mathbf{s} &= [0, 3, 2, 1, 0, 3, 2, 1, 0, 3, 2, 1, 0, 3, 2, 1] \\
\mathbf{r} &= [0, 0, 0, 0, 3, 3, 3, 3, 2, 2, 2, 2, 1, 1, 1, 1]
\end{align*}
When $N=5$ and $\mathbf{w}=[1,4,3,0,2]$, FARCH produces the following pair of CH sequences:
\begin{align*}
\mathbf{s} &= [1, 4, 3, 0, 2, 1, 4, 3, 0, 2, 1, 4, 3, 0, 2, 1, 4, 3, 0, 2, 1, 4, 3, 0, 2] \\
\mathbf{r} &= [1, 2, 0, 3, 4, 0, 3, 4, 0, 3, 4, 0, 3, 4, 0, 3, 4, 1, 2, 1, 2, 1, 2, 1, 2]
\end{align*}

Note that, from the FARCH construction, it is easy to see that there are $N!$ distinct pairs of CH sequences for each $N$.

Unlike the approaches taken for A-MOCH~\cite{Bian11}, ACH~\cite{Bian13}, ARCH~\cite{Chang14} (only for even $N$ cases), D-QCH~\cite{Sheu16} and WFM~\cite{ChangLiao16}, our FARCH algorithm requires a one-to-one correspondence between sender sequence and receiver sequence.
This property is acceptable in some scenarios, such as half-duplex communication systems or Bluetooth pairing.
To make each pair of SUs spread out the rendezvous in time and channels more evenly, we can further allow sender to change its CH sequence in accordance with receiver after each rendezvous.

\subsection{Metrics Evaluation}
Now we evaluate the MCTTR and MTTR property of FARCH.

\begin{theorem}\label{thm:FARCH-MCTTR}
FARCH has $\text{MCTTR}=N^2$.
\end{theorem}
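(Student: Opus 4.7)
The proof matches the lower bound $\text{MCTTR}\geq N^2$ from Theorem~\ref{thm:MCTTR} with an upper bound $\text{MCTTR}\leq N^2$. Since the common period of $\mathbf{s}$ and $\mathbf{r}$ is $T=N^2$, it is enough to show that for every non-negative shift $\tau$ and every channel $k\in\mathbb{Z}_N$, at least one rendezvous on $k$ occurs within one full period, in both the sender-ahead and receiver-ahead configurations.

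First I would verify that each channel $k$ appears exactly $N$ times in both $\mathbf{s}$ and $\mathbf{r}$. For $\mathbf{s}$ this is immediate since $\mathbf{s}$ is $N$ copies of the permutation $\mathbf{w}$. For $\mathbf{r}$ the even case is direct by construction; for odd $N$ a short tally shows $w_0$ appears $1+(N-1)=N$ times, $w_{N-1}$ appears $1+(N-1)=N$ times, and every interior $w_i$ appears $N$ times in the $N$ repeats of $\mathbf{w}'$. Applying Proposition~\ref{prop:wwL} then yields
\[
\sum_{\tau=0}^{N^2-1}H_{\mathbf{s}^{\tau},\mathbf{r}}(k)=N^2=\sum_{\tau=0}^{N^2-1}H_{\mathbf{r}^{\tau},\mathbf{s}}(k)
\]
for every $k\in\mathbb{Z}_N$.

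The main step is then to show that every individual rendezvous count $H_{\mathbf{s}^{\tau},\mathbf{r}}(k)$ and $H_{\mathbf{r}^{\tau},\mathbf{s}}(k)$ is at most $1$. Combined with the two identities above and the fact that there are exactly $N^2$ shifts, this forces every such count to equal exactly $1$, in particular to be $\geq 1$. The $H\leq 1$ bound rests on two structural observations about one period of each sequence: (a) because $\mathbf{s}$ is $N$-periodic, the $N$ positions at which $\mathbf{s}$ takes the value $k=w_c$ form a single residue class $\{j\equiv c\pmod N\}$; (b) the $N$ positions at which $\mathbf{r}$ takes any fixed value realize each residue modulo $N$ exactly once. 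Given (a) and (b), a sender-ahead rendezvous at position $i$ requires $r_i=k$ and $\tau+i\equiv c\pmod N$, and among the $N$ positions of $r_i=k$ exactly one has residue $c-\tau\bmod N$; the receiver-ahead direction follows by a symmetric argument since (a) and (b) together again leave exactly one candidate position.

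The main obstacle is verifying (b) for the odd-$N$ receiver sequence, whose construction is considerably more intricate than in the even case. Concretely, I would write the positions of $w_0$, $w_{N-1}$, and each interior $w_i$ directly from the construction of $\mathbf{r}$ and observe that in each case the residues modulo $N$ form an arithmetic progression with common difference $\pm 2$; since $\gcd(2,N)=1$ for odd $N$, the progression covers all of $\mathbb{Z}_N$. For even $N$ the corresponding check is trivial, because the $N$ positions where $\mathbf{r}=w_c$ are $\{cN,cN+1,\ldots,cN+N-1\}$, a full block of $N$ consecutive integers.
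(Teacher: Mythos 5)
Your proposal is correct and rests on exactly the same two structural facts as the paper's proof: the positions of each value in $\mathbf{s}$ form a single residue class modulo $N$, while the positions of each value in $\mathbf{r}$ meet every residue class modulo $N$ exactly once (your arithmetic-progression check with common difference $\pm 2$ and $\gcd(2,N)=1$ is a valid way to verify the latter for odd $N$, which the paper merely asserts). The only difference is cosmetic: your detour through Proposition~\ref{prop:wwL} and the bound $H\leq 1$ is redundant, since observations (a) and (b) already pin down exactly one rendezvous position per shift and channel, giving $H=1$ directly as in the paper.
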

\begin{proof}
Consider a pair of FARCH sequences $\mathbf{s}$ and $\mathbf{r}$.
By the FARCH construction, we have $s_i=s_j$ if and only if $i-j\equiv 0$ (mod $N$), and $r_i\neq r_j$ for any $i-j\equiv 0$ (mod $N$).
Therefore, $H_{\mathbf{s}^{\tau}, \mathbf{r}}(k)=1$ and $H_{\mathbf{r}^{\tau}, \mathbf{s}}(k)=1$ for any $\tau\in\mathbb{Z}_{N^2}$ and any $k\in\mathbb{Z}_N$.
This completes the proof.
\end{proof}


\begin{theorem}\label{thm:FARCH-MTTR-even}
FARCH has $\text{MTTR}=N+1$ if $N$ is even.
\end{theorem}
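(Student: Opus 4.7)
The plan is to compute $\text{MTTR}$ directly by examining every possible clock offset between $\mathbf{s}$ and $\mathbf{r}$. Because FARCH is asymmetric, I split into the two directions appearing in the definition of MTTR: (i) $\mathbf{r}$'s clock is $\tau$ slots ahead of $\mathbf{s}$'s, so we compare $r_{\tau+i}$ with $s_i$; (ii) $\mathbf{s}$'s clock is $\tau$ slots ahead of $\mathbf{r}$'s, so we compare $s_{\tau+i}$ with $r_i$. Both sequences have common period $N^2$, so in each direction it suffices to let $\tau$ range over $\{0,1,\ldots,N^2-1\}$.

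Direction (ii) is the easy case. Since $s_i = w_{i \bmod N}$ is $N$-periodic, only $\tau' := \tau \bmod N$ matters, and the first $N$ entries of $\mathbf{r}$ all equal $w_0$. Hence the first rendezvous occurs at the smallest $i \in [0, N)$ with $(\tau' + i) \equiv 0 \pmod N$, yielding $\text{TTR} \leq N$, with equality at $\tau' = 1$.

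Direction (i) is the main case and is where the extra $+1$ arises. I parametrize $\tau = qN + r$ with $0 \leq q, r < N$, so that $r_\tau, r_{\tau+1}, \ldots$ begins with $N - r$ copies of $w_q$, then $N$ copies of $w_{(q+1) \bmod N}$, then $N$ copies of $w_{(q+2) \bmod N}$, and so on, while $s_i = w_{i \bmod N}$ cycles through every channel once per $N$ slots. Three subcases cover all $(q, r)$: (a) if $q + r \leq N - 1$, rendezvous occurs at $i = q$ in the first block, giving $\text{TTR} = q + 1 \leq N$; (b) if $q + r \geq N$ and $q \leq N - 2$, the earliest match occurs at $i = q + 1$ in the second block, giving $\text{TTR} = q + 2 \leq N$; (c) if $q = N - 1$ and $r \geq 1$, the first block avoids the unique index $i = N - 1$ at which $s_i = w_{N-1}$ (since $N - 1 \geq N - r$), and at $i = N - 1$ the second block already displays $w_0 \neq w_{N-1}$, so the first rendezvous is deferred to $i = N$, where $s_N = w_0 = r_{\tau + N}$, giving $\text{TTR} = N + 1$.

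Combining the two directions yields $\text{MTTR} \leq N + 1$, and subcase (c) (for instance $\tau = (N - 1)N + 1$) exhibits a shift for which $\text{TTR} = N + 1$, so the bound is tight. The main obstacle is the subcase analysis in direction (i): each run of $\mathbf{r}$ has length exactly $N$, so when the wraparound of $\mathbf{w}$ inside $\mathbf{r}$ lands at a block boundary it can just barely skip past the lone appearance of the target channel in the current $N$-block of $\mathbf{s}$, an off-by-one obstruction that cannot arise in direction (ii) since $\mathbf{s}$ itself visits every channel within any $N$ consecutive slots. This is exactly the phenomenon that rules out $\text{MTTR} = N$ for the even-$N$ construction and motivates the different recipe used for odd $N$.
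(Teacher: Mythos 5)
Your proof is correct and follows essentially the same route as the paper: split on which clock is ahead, dispose of the direction where $\mathbf{s}$ is ahead by noting the first $N$ entries of $\mathbf{r}$ are constant while $\mathbf{s}$ sweeps all channels, and in the other direction analyze the block structure of $\mathbf{r}^{\tau}$ (your parametrization $\tau=qN+r$ is just the paper's ``$m$ copies of $w_i$ followed by $w_{i+1}$'s'' with $m=N-r$, $i=q$). Your treatment is in fact slightly more complete, since you explicitly isolate the boundary subcase $q=N-1$, $r\geq 1$ that forces $\mathrm{TTR}=N+1$ and thereby exhibits tightness, which the paper leaves implicit.
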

\begin{proof}
Consider a pair of FARCH sequences $\mathbf{s}$ and $\mathbf{r}$.
Assume that all frequency channels are available.

First assume that sequence $\mathbf{s}$'s clock is $\tau$ slots ahead of sequence $\mathbf{r}$'s clock.
Note that for any $\tau$, the first $N$ entries of $\mathbf{r}$ are all equal to a unique value, while for those $N$ entries, $\mathbf{s}^{\tau}$ runs over all the $N$ values.
This implies that there must be a rendezvous between $\mathbf{s}^{\tau}$ and $\mathbf{r}$ within $N$ slots for any $\tau$.

Second, consider the case that sequence $\mathbf{r}$'s clock is $\tau$ slots ahead of sequence $\mathbf{s}$'s clock.
The first $N+1$ entries of $\mathbf{s}$ are $w^{}_0,w^{}_1,\ldots,w^{}_{N-1},w_0$, where $\mathbf{w}=[w^{}_0,w^{}_1,\ldots,w^{}_{N-1}]$ is the selected permutation.
The first $N+1$ entries of $\mathbf{r}^{\tau}$ must be of the form: $m$ consecutive $w^{}_i$s followed by some consecutive $w^{}_{i+1}$s, for some $i\in\mathbb{Z}_N$ and $1\leq m\leq N$.
That is,
\begin{align*}
\mathbf{s} &= [w^{}_0,w^{}_1,\ldots,w^{}_{m-1},w^{}_m,\ldots,w^{}_{N-1},w^{}_0,\ldots] \\
\mathbf{r}^\tau &= [w^{}_i,w^{}_i,\ldots,w^{}_i,w^{}_{i+1},\ldots,w^{}_{i+1},w^{}_{i+1},\ldots].
\end{align*}
If $0\leq i\leq m-1$, there is a rendezvous at channel $w^{}_i$ at the $(i+1)$-th entry; otherwise, if $m\leq i\leq N-1$, there is a rendenzvous at channel $w^{}_{i+1}$ at the $(i+1)$-th entry.
Therefore, we have $MTTR=N+1$.
\end{proof}

It should be pointed out that although the WFM~\cite{ChangLiao16} also produces $N+1$, it only has $N^2$ distinct pairs of CH sequences for each $N$.

\begin{theorem}\label{thm:FARCH-MTTR-odd}
FARCH has $\text{MTTR}=N$ if $N$ is odd.
\end{theorem}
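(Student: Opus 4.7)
The plan is to establish $\text{MTTR} = N$ by combining the lower bound from Theorem~\ref{thm:MTTR} (which applies because Theorem~\ref{thm:FARCH-MCTTR} gives $\text{MCTTR} = N^2$) with a matching upper bound $\text{MTTR} \leq N$. For the upper bound, I would fix an arbitrary clock shift $\tau$ and show that a rendezvous occurs within the first $N$ slots of the reference sequence, handling the two shift directions separately.

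For the direction in which $\mathbf{s}$'s clock is ahead, the first $N$ entries of $\mathbf{r}$ read $w_0, w_{N-1}, w_{N-2}, \ldots, w_1$, which can be expressed uniformly as $w_{(-j) \bmod N}$ for $j = 0, 1, \ldots, N-1$. Since $\mathbf{s}$ has period $N$ (it is $\mathbf{w}$ repeated), the first $N$ entries of $\mathbf{s}^{\tau}$ are $w_{(c+j) \bmod N}$ where $c = \tau \bmod N$. A rendezvous at slot $j$ then requires $2j \equiv -c \pmod{N}$, and the oddness of $N$ makes $2$ invertible modulo $N$, yielding a unique solution $j \in \{0, 1, \ldots, N-1\}$. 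This guarantees the sender-ahead TTR is at most $N$.

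For the direction in which $\mathbf{r}$'s clock is ahead, I would first observe that $\mathbf{r}$ is cyclically equivalent to $(\mathbf{w}'')^N (\mathbf{w}')^N$: the leading copy of $\mathbf{w}''$ at positions $0,1$ joins with the trailing $N-1$ copies to form $N$ consecutive copies of $\mathbf{w}''$ of total length $2N$, followed by $N$ copies of $\mathbf{w}'$ of total length $N(N-2)$. I would then split the shift into four cases according to whether the length-$N$ window $[\tau, \tau+N-1]$ lies entirely in the $\mathbf{w}''$-block, entirely in the $\mathbf{w}'$-block, or straddles one of the two boundaries; in each case I would write down $\mathbf{r}^{\tau}_j$ explicitly using parity of the position inside the $\mathbf{w}''$-block and residues modulo $N-2$ inside the $\mathbf{w}'$-block, and then solve the matching equation $\mathbf{r}^{\tau}_j = w_j$.

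The main obstacle will be the case where the window lies entirely inside the $\mathbf{w}'$-block: the rendezvous condition there reduces to $2j \equiv -b \pmod{N-2}$ with $b = (\tau - 2N) \bmod (N-2)$. Oddness of $N$ (hence of $N-2$) makes $2$ invertible modulo $N-2$, giving a unique solution $j_0 \in \{0, 1, \ldots, N-3\}$; however, one must additionally verify that $j_0$ (or its alternative representative $j_0 + (N-2)$ when $j_0 = 0$) lies in the admissible range of slot indices that actually correspond to a match, not just satisfies the congruence. The two boundary-straddling cases reduce to short parity arguments, placing the match at $j = 0$, $j = N-1$, $j = q/2$, or $j = (N-1)/2$ depending on the parity of the overlap length. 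Oddness of $N$ plays a dual role throughout: it makes $2$ invertible both modulo $N$ and modulo $N-2$, and it ensures the critical parity alignment in the $\mathbf{w}''$-block (where $N-1$ is even, so $w_{N-1}$ sits at an even offset). Combining the two directions with Theorem~\ref{thm:MTTR} then yields $\text{MTTR} = N$.
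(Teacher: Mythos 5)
Your proposal is correct and follows essentially the same route as the paper's proof: the sender-ahead direction via the congruence $2j\equiv -\tau \pmod N$ (solvable since $N$ is odd), and the receiver-ahead direction via the same four-case decomposition of the length-$N$ window relative to the $\mathbf{w}''$- and $\mathbf{w}'$-blocks, with the interior $\mathbf{w}'$ case handled by invertibility of $2$ modulo $N-2$ and the boundary cases by parity. Your explicit invocation of Theorem~\ref{thm:MTTR} for the matching lower bound is a small but welcome addition that the paper leaves implicit.
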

\begin{proof}
Consider a pair of FARCH sequences $\mathbf{s}$ and $\mathbf{r}$.
Assume that all frequency channels are available.

First assume that sequence $\mathbf{s}$'s clock is $\tau$ slots ahead of sequence $\mathbf{r}$'s clock.
By the FARCH algorithm, we know the first $N$ entries of $\mathbf{s}^{\tau}$ is $[w^{}_{\tau}, w^{}_{1+\tau}, w^{}_{2+\tau}, \ldots, w^{}_{N-1+\tau}]$ and the first $N$ entries of $\mathbf{r}$ is $[w^{}_{0}, w^{}_{N-1}, w^{}_{N-2}, \ldots, w^{}_1]$, where the addition is taken modulo $N$.
Hence, there is a rendezvous between $\mathbf{s}^{\tau}$ and $\mathbf{r}$ within $N$ slots, if and only if
\begin{equation} \label{eq:odd1}
i+\tau \equiv -i \text{ (mod $N$)}
\end{equation}
for some $i\in\mathbb{Z}_N$.
Since $N$ is odd, such an $i$ exists for any choice of $\tau$.
More precisely,
\[
 i =
\begin{cases}
-\frac{\tau}{2} \text{ (mod $N$)} & \text{if } \tau \text{ is even}, \\
-\frac{\tau+N}{2} \text{ (mod $N$)} & \text{if } \tau \text{ is odd}.
\end{cases}
\]
Therefore, we conclude that~\eqref{eq:odd1} holds for any $\tau$, i.e., there is a rendezvous between $\mathbf{s}^{\tau}$ and $\mathbf{r}$ within $N$ slots for any $\tau$.

Second, consider that sequence $\mathbf{r}$'s clock is $\tau$ slots ahead of sequence $\mathbf{s}$'s clock.
Construct a sequence $\mathbf{g}_1$ by repeating $[w^{}_{0}, w^{}_{N-1}]$ $N$ times and concatenating them into a single sequence.
Construct a sequence $\mathbf{g}_2$ by repeating $[w^{}_{N-2}, w^{}_{N-3}, \ldots, w^{}_1]$ $N$ times and concatenating them into a single sequence.
For any $\tau$, the first $N$ entries of $\mathbf{r}^{\tau}$ have the following four possible forms:
\begin{enumerate}
  \item any consecutive $N$ entries in $\mathbf{g}_1$;
  \item any consecutive $N$ entries in $\mathbf{g}_2$;
  \item the last $m$ entries of $\mathbf{g}_1$ followed by the first $N-m$ entries of $\mathbf{g}_2$;
  \item the last $n$ entries of $\mathbf{g}_2$ followed by the first $N-n$ entries of $\mathbf{g}_1$.
\end{enumerate}

We continue to examine the MTTR of the above four cases.
An illustration of these four cases is presented in Fig.~\ref{fig:proof}.

\begin{figure}
\begin{center}
  \includegraphics[height=1.5in,width=5in]{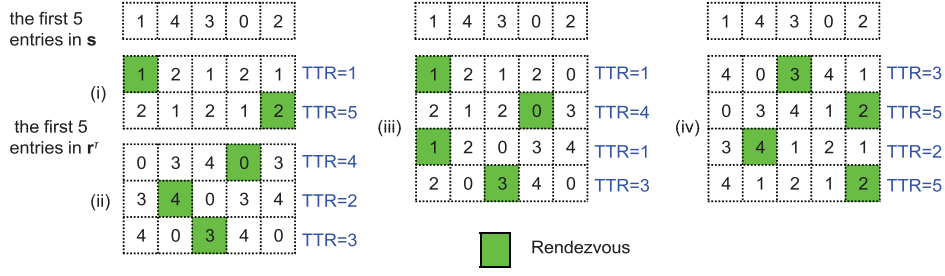}
\end{center}
\caption{Illustration of the four cases when sequence $\mathbf{r}$'s clock is $\tau$ slots ahead of sequence $\mathbf{s}$'s clock in Theorem 6's proof for $N=5$ and $\mathbf{w}=[1,4,3,0,2]$.}
\label{fig:proof}
\end{figure}

(i) By the form of $\mathbf{g}_1$, it is easy to see that either the first entry of $\mathbf{r}^{\tau}$ is $w^{}_{0}$ or the $N$-th entry of $\mathbf{r}^{\tau}$ is $w^{}_{N-1}$.
This implies that $\mathbf{s}$ and $\mathbf{r}^{\tau}$ always have a rendezvous at channel $w_0$ or $w^{}_{N-1}$ within the first $N$ slots.

(ii) By the form of $\mathbf{g}_2$, if the first entry of $\mathbf{r}^\tau$ is $N-2-\delta$ for some $0\leq\delta\leq N-2$, then the $N-2$ entries of $\mathbf{r}^\tau$ followed by the first one will be given by the following:
$$w^{}_{N-3-\delta},w^{}_{N-4-\delta},\ldots,w^{}_1,w^{}_{N-2},w^{}_{N-3},\ldots,w^{}_{N-2-\delta}.$$
These indices can be viewed as $[N-3-\delta-i]_{N-2}+1$, for $i=1,2,\ldots,N-2$, where $[a]_b$ refers to the residue of $a$ in $\mathbb{Z}_b$.
Therefore, there exists a rendezvous between $\mathbf{s}$ and $\mathbf{r}^\tau$ within the first $N-1$ slots if and only if
\begin{equation} \label{eq:odd2}
N-3-\delta-i \equiv i-1 \text{ (mod $N-2$)}
\end{equation}
for some $i\in\{1,2,\ldots,N-2\}$.
It is easy to see that
\[
 i =
\begin{cases}
N-2- \frac{\delta}{2}& \text{if } \delta \text{ is even}, \\
\frac{N-2-\delta}{2} & \text{if } \delta  \text{ is odd}, \\
\end{cases}
\]
is a solution to \eqref{eq:odd2}, which completes this case.

(iii) In the case that $m$ is even, the first entry of $\mathbf{r}^\tau$ is $w^{}_0$, as the same as in $\mathbf{s}$.
As for the case that $m$ is odd, the first $N$ entries of $\mathbf{r}^\tau$ are
\[
\underbrace{w^{}_{N-1},w^{}_0,w^{}_{N-1},\ldots,w^{}_{N-1}}_{m\text{ items}}, w^{}_{N-2}, w^{}_{N-3}, \ldots, w^{}_{m-1}.
\]
Then, the $(m+i)$-th entry of $\mathbf{r}^\tau$ is $w^{}_{N-1-i}$ for $1\leq i\leq N-m$.
Since the $(m+i)$-th entry of $\mathbf{s}$ is $w^{}_{m+i-1}$ for $1\leq i\leq N-m$, they have a rendezvous at the $(m+i)$-th entry (i.e., channel $w^{}_{m+i-1}$) when $i=\frac{N-m}{2}$.

(iv) If $n$ is odd, the $N$-th entry of $\mathbf{r}^\tau$ is $N-1$, as the same as in $\mathbf{s}$.
As for the even case, the first $n$ entries of $\mathbf{r}^\tau$ are $$w^{}_n,w^{}_{n-1},\ldots,w^{}_1.$$
Then, the $i$-th entry of $\mathbf{r}^\tau$ is $w^{}_{n+1-i}$, while the $i$-th entry of $\mathbf{s}$ is $w^{}_{i-1}$, for $1\leq i\leq n$.
So they have a rendezvous at the $i$-th entry (i.e., channel $w^{}_{i-1}$) when $i=\frac{n}{2}+1$.

We complete the proof of $MTTR=N$ when sequence $\mathbf{r}$'s clock is $\tau$ slots ahead of sequence $\mathbf{s}$'s clock.
\end{proof}

\section{Further Discussion on the Worst-Case TTR}
As defined in Section II, MTTR and MCTTR only consider two extreme scenarios of opportunistic spectrum accessing.
To evaluate the worst-case TTR performance of asynchronous CH sequences with the maximal rendezvous diversity more comprehensively, we introduce one new metric--$\text{MTTR}_h$, that is used to denote the worst-case TTR for different channel availabilities.
This more general metric can help us better investigate how even the rendezvouses are distributed in regard to time and channels, in addition to MTTR and MCTTR.

For $h=0,1,\ldots, N-1$, $\text{MTTR}_h$ is defined as the maximum time for two CH sequences, $\mathbf{u}$ and $\mathbf{v}$, to rendezvous under all possible clock differences when there are at most $h$ unavailable channels for the two SUs, that is, the smallest value of $\max\{m,n\}$ such that
\[
  \sum_{k \in \mathbf{A}_h} H_{\mathbf{u}^{\tau},\mathbf{v}_n}(k) \geq 1,  \ \ \sum_{k \in \mathbf{A}_h} H_{\mathbf{v}^{\tau'},\mathbf{u}_m}(k) \geq 1,
\]
for any non-negative integers $\tau,\tau'$ and any $\mathbf{A}_h \in\mathbb{Z}_N$ with cardinality $N-h$.
Obviously, we have ${MTTR}={MTTR}_0$ and ${MCTTR}={MTTR}_{N-1}$.

The following theorem provides a lower bound on $\text{MTTR}_h$ of a minimum MCTTR system, which can help evaluate the FARCH algorithm in terms of $\text{MTTR}_h$.
\begin{theorem}\label{thm:WTTRh}
For asynchronous CH sequences with the maximal rendezvous diversity in an asynchronous CH system with $N$ ($N\geq 2$) licensed channels, if $\text{MCTTR}=N^2$, then we have
\[
MTTR_h \geq (h+1)N,
\]
for $h=0,1,\ldots,N-2$.
\end{theorem}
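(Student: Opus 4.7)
The plan is to generalize the double-counting argument used in the proof of Theorem~\ref{thm:MTTR}. Let $\mathbf{u}, \mathbf{v}$ be a pair of asynchronous CH sequences of period $T$ with the maximal rendezvous diversity and $\text{MCTTR} = N^2$, and suppose $MTTR_h$ is realized by some $m, n$ with $\max\{m,n\} = MTTR_h$. By Theorem~\ref{thm:MCTTR}, both sequences visit every channel with uniform frequency, so $y_k = T/N$ for all $k$, where $y_k$ denotes the number of $k$'s in $\mathbf{v}$.

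The key step is to recast the condition $\sum_{k \in \mathbf{A}_h} H_{\mathbf{v}^{\tau}, \mathbf{u}_m}(k) \geq 1$ (required for every $(N-h)$-subset $\mathbf{A}_h \subseteq \mathbb{Z}_N$) combinatorially. For each shift $\tau$, let
\[
A(\tau) := \{k \in \mathbb{Z}_N : H_{\mathbf{v}^{\tau}, \mathbf{u}_m}(k) \geq 1\}
\]
denote the set of channels on which a rendezvous occurs within the first $m$ slots of $\mathbf{u}$. The requirement is equivalent to $A(\tau) \cap \mathbf{A}_h \neq \emptyset$ for every $(N-h)$-subset $\mathbf{A}_h$, which holds if and only if $|A(\tau)| \geq h+1$ (otherwise $\mathbb{Z}_N \setminus A(\tau)$ contains some $(N-h)$-subset disjoint from $A(\tau)$, violating the condition).

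Summing $|A(\tau)| \geq h+1$ over $\tau \in \mathbb{Z}_T$ and applying the trivial bound $|A(\tau)| \leq \sum_k H_{\mathbf{v}^{\tau}, \mathbf{u}_m}(k)$ gives
\[
(h+1)T \leq \sum_{\tau=0}^{T-1} \sum_{k=0}^{N-1} H_{\mathbf{v}^{\tau}, \mathbf{u}_m}(k) = \sum_{k=0}^{N-1} x_k y_k = \frac{T}{N}\sum_{k=0}^{N-1} x_k = \frac{T}{N}\, m,
\]
where $x_k$ is the number of $k$'s in $\mathbf{u}_m$, the second equality invokes Proposition~\ref{prop:wwL}, and the third uses uniform frequency. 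This yields $m \geq (h+1)N$; the symmetric argument applied to the other clock direction gives $n \geq (h+1)N$, and hence $MTTR_h = \max\{m,n\} \geq (h+1)N$.

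The main obstacle is identifying the combinatorial reformulation $|A(\tau)| \geq h+1$, which converts the whole family of per-subset availability constraints into a single linear inequality amenable to a Proposition~\ref{prop:wwL}-style double count. Once this reformulation is in place, the argument is a direct generalization of the proof of Theorem~\ref{thm:MTTR}; it also recovers the $\text{MCTTR} \geq N^2$ bound (at $h = N-1$) and the $\text{MTTR} \geq N$ bound (at $h = 0$) as extremal cases, providing a unified view of the lower bounds established in Section~III.
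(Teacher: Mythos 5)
Your proposal is correct and follows essentially the same route as the paper's proof: the reformulation ``the per-subset constraints hold iff $|A(\tau)|\geq h+1$'' is exactly the paper's contrapositive argument (a rendezvous-channel set of size at most $h$ would miss some $(N-h)$-subset), and the subsequent step---bounding $h+1\leq\sum_k H_{\mathbf{v}^{\tau},\mathbf{u}_m}(k)$, summing over $\tau$, and applying Proposition~\ref{prop:wwL} together with the uniform-frequency consequence of $\text{MCTTR}=N^2$---is identical to the paper's double count. Your version is slightly more explicit in treating both clock directions and in noting that the bound specializes to Theorems~\ref{thm:MCTTR} and~\ref{thm:MTTR} at $h=N-1$ and $h=0$, but these are presentational rather than substantive differences.
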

\begin{proof}
Let $\mathbf{u}$ and $\mathbf{v}$ be a pair of asynchronous CH sequences of period $T$ with the maximal rendezvous diversity.
Let $l$ be the $\text{MTTR}_h$ between $\mathbf{u}$ and $\mathbf{v}$.
Obviously, $1 \leq l\leq T$.

Suppose there exists a non-negative integer $\tau^*$ such that
\[
\sum_{k=0}^{N-1} H_{\mathbf{v}^{\tau^*},\mathbf{u}_l}(k) \leq h.
\]
Let $\mathbf{B}_h^* \in\mathbb{Z}_N$ be a collection of channel indices such that $H_{\mathbf{v}^{\tau^*},\mathbf{u}_l}(k) \geq 1$.
Obviously, $|\mathbf{B}_h^*| \leq h$.
Then we always can find an $\mathbf{A}_h^*\in\mathbb{Z}_N$ with cardinality $N-h$ such that $\mathbf{B}_h^* \cap \mathbf{A}_h^*=\emptyset$.
This implies that
\[
\sum_{k \in \mathbf{A}_h^*} H_{\mathbf{v}^{\tau^*},\mathbf{u}_l}(k) =0,
\]
which contradicts the defining property of $\text{MTTR}_h$.

Hence, we have
\[
\sum_{k=0}^{N-1} H_{\mathbf{v}^{\tau},\mathbf{u}_l}(k) \geq h+1.
\]
for any non-negative integer $\tau$.

Let $x_k$ and $y_k$ denote the number of $k$'s in $\mathbf{u}_l$ and $\mathbf{v}$, respectively.
By using the proof of Theorem 3, we further have
\begin{align*}
\sum_{\tau=0}^{T-1} \sum_{k=0}^{N-1} H_{\mathbf{v}^{\tau},\mathbf{u}_l}(k)& = \sum_{k=0}^{N-1} \sum_{\tau=0}^{T-1} H_{\mathbf{v}^{\tau},\mathbf{u}_l}(k) = \sum_{k=0}^{N-1} x_k y_k \\
&= \frac{T}{N} \sum_{k=0}^{N-1} x_k \geq  (h+1) T.
\end{align*}
Therefore, we obtain $MTTR_h \geq (h+1)N$.
\end{proof}

\begin{figure}
\begin{center}
  \includegraphics[height=2.8in,width=4in]{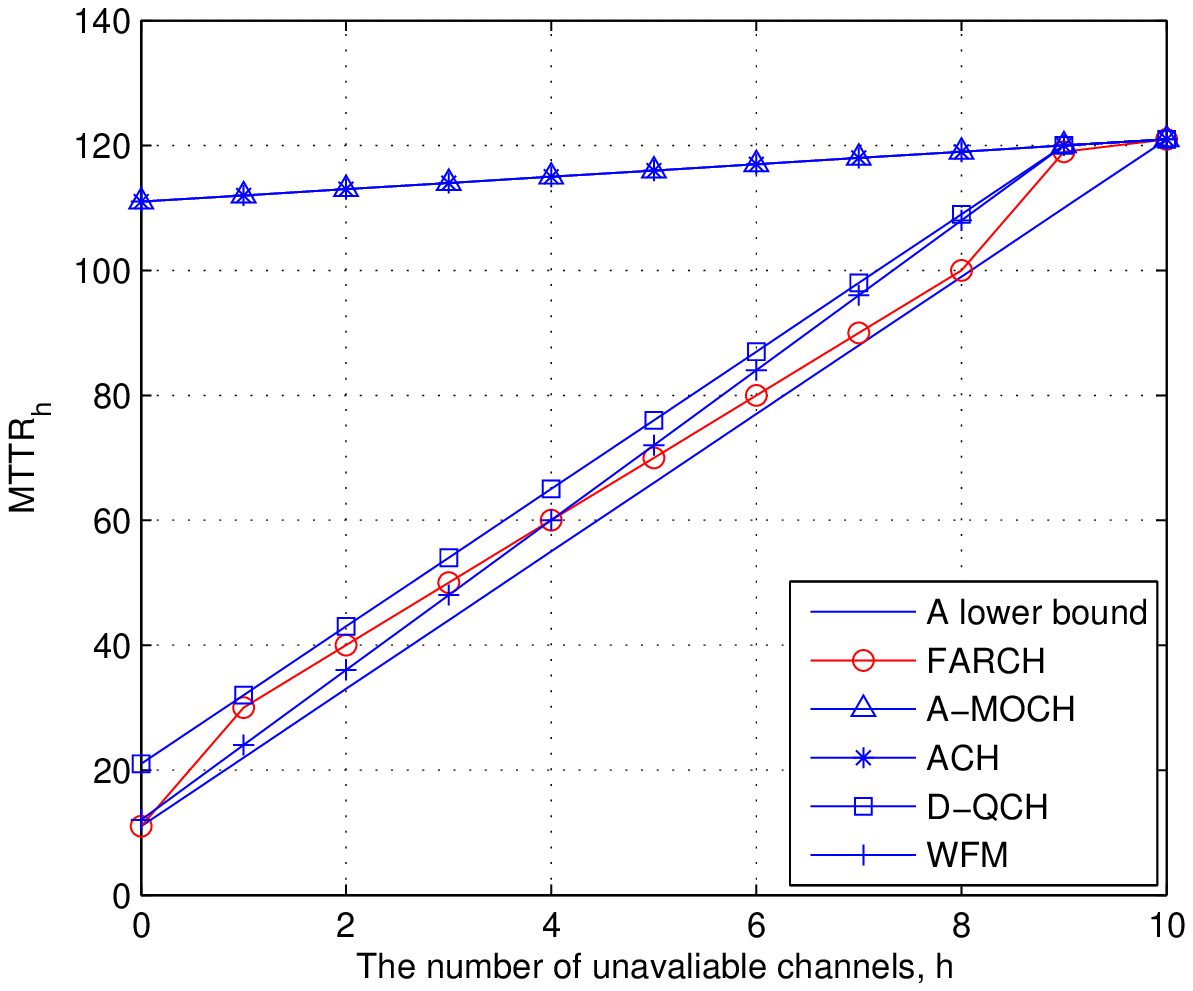}
\end{center}
\caption{$\text{MTTR}_h$ for $h=0,1,\ldots,N-1$, when $N=11$.}
\label{fig:mTTR11}
\end{figure}

\begin{figure}
\begin{center}
  \includegraphics[height=2.8in,width=4in]{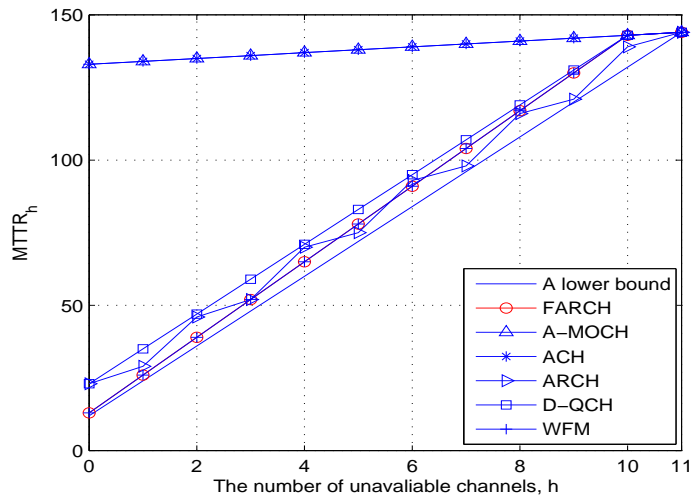}
\end{center}
\caption{$\text{MTTR}_h$ for $h=0,1,\ldots,N-1$, when $N=12$.}
\label{fig:mTTR12}
\end{figure}

One can show that both A-MOCH and ACH have ${MTTR}_h=N^2-N+1+h$ for $h=0,1,\ldots,N-1$, as they have ${MTTR}=N^2-N+1$.
However, a complete analysis of $\text{MTTR}_h$ in the FARCH seems complicated.
Instead, we present its numerical results in Fig.~\ref{fig:mTTR11} and Fig.~\ref{fig:mTTR12} by examining all possible shift values for $N=11$ and $12$, in comparison with A-MOCH, ACH, ARCH, D-QCH and WFM.
It is shown that, different from A-MOCH and ACH, the $\text{MTTR}_h$ of FARCH is always close to the lower bound for all $h$, as well as ARCH (only defined for even $N$), D-QCH and WFM.
We also note that the superiority of $\text{MTTR}_h$ in the FARCH becomes weaker when $h$ increases.
This observation implies that FARCH, ARCH, D-QCH and WFM algorithms all spread out the rendezvous in time and channels almost evenly, which will be useful to explain the simulation results presented in Section VI.

\section{Simulations}
In this section, we compare the proposed FARCH scheme against five existing asynchronous CH schemes with the maximal rendezvous diversity and optimal MCTTR, A-MOCH, ACH, ARCH, D-QCH and WFM, via simulation.
We aim to show that under a variety of scenarios, FARCH not only can achieve rendezvous quickly in the worst-case, but also can achieve rendezvous
quickly on the average.

In the simulation, we consider 10 pairs of SUs and $X$ PUs sharing $N$ channels.
We assume $X<N$, so that there is always at least one available channel, in which no PU signals are present.
The $X$ PUs are operating on the $X$ channels, respectively, and each channel is randomly chosen in each simulation run.
It is assumed that each PU has an independent probability $p$ to be transmitting.
All of the SUs are within the transmission range of any one of the PUs.
Each pair of SUs in each simulation run independently generate their CH sequences, and perform CH in accordance with the sequence.
Due to the lack of global synchronization, each SU determines its clock time independently of the other SUs.


We carry out simulations to attain the average TTR in the presence of PU signals averaged over all possible relative shift positions, which are generated with uniform probability.
Note that the MTTR and MCTTR are the theoretical worst-case TTRs in the two extreme PU traffic scenarios, while the average TTR is obtained empirically.
Each simulation point represents the average value of 10000 independent simulation runs.

\subsection{Impact of the PU traffic}
\begin{figure}
\begin{center}
  \includegraphics[height=3.2in,width=4.5in]{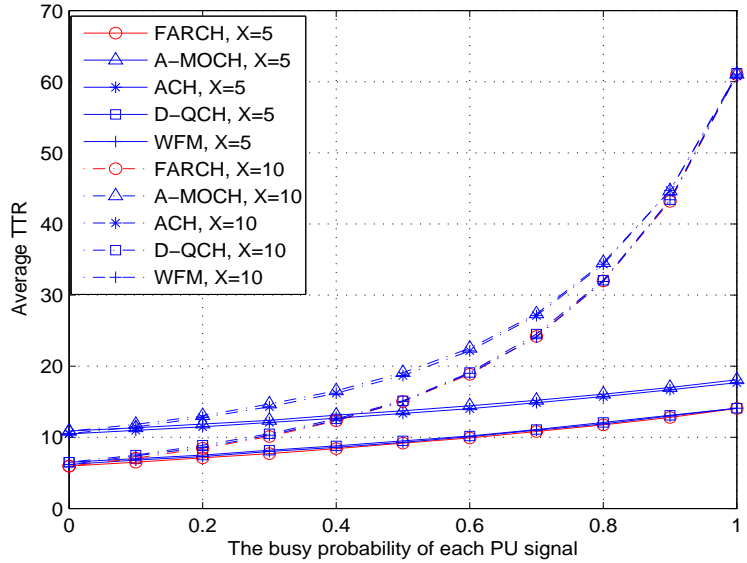}
\end{center}
\caption{Average TTR in the presence of PU signals with $N=11$, $X=5,10$.}
\label{fig:ATTR11}
\end{figure}

\begin{figure}
\begin{center}
  \includegraphics[height=3.2in,width=4.5in]{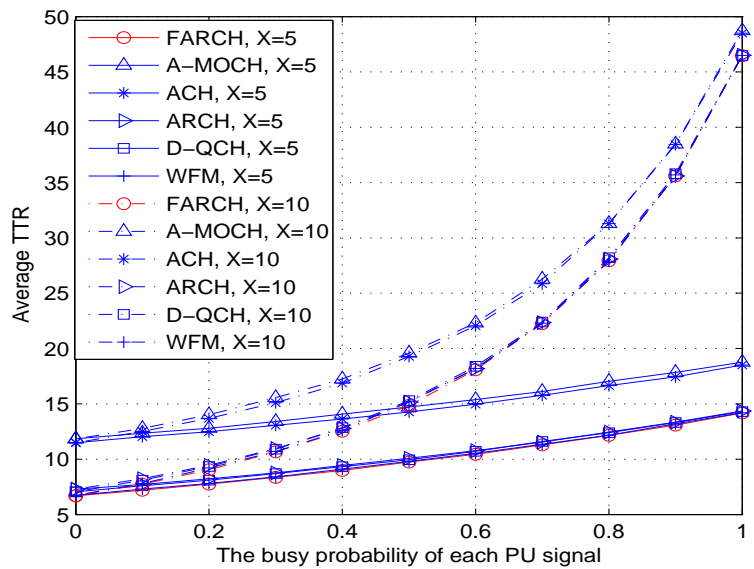}
\end{center}
\caption{Average TTR in the presence of PU signals with $N=12$, $X=5,10$.}
\label{fig:ATTR12}
\end{figure}

As FARCH uses different algorithms to construct the receiver sequence for even $N$ and odd $N$, and ARCH~\cite{Chang14} is only defined for even $N$, in the simulation we consider $N=11,12$, respectively.
To model different PU traffic, we consider $X=5,10$ with $0 \leq p \leq 1$ for each $N$.

In Fig.~\ref{fig:ATTR11}, we compare the proposed FARCH scheme with A-MOCH, ACH, D-QCH and WFM in terms of average TTR, when $N=11$.
It is shown that FARCH, D-QCH and WFM achieve a smaller average TTR in all cases.
This is because that FARCH, D-QCH and WFM have smaller $\text{MTTR}_h$ than others for all $0 \leq h \leq N-2$, as shown in Fig.~\ref{fig:mTTR11}, and all schemes have the same MCTTR.
We observe that FARCH and WFM outperform D-QCH by a small amount in all cases, although FARCH and WFM have ${MTTR}=N$ and $MTTR=N+1$, respectively, almost half of ${MTTR}=2N-1$ in D-QCH.
This phenomenon can be attributed to the fact that these three schemes have similar evenness of rendezvous distribution as shown in Fig.~\ref{fig:mTTR11}, and MTTR is obviously not a dominant factor in determining average TTR in the presence of PU signals.
In the case of $X=10$ with large $p$, we also observe that FARCH, D-QCH and WFM only have a slightly lower average TTR than the other schemes.
The reason is that the $\text{MTTR}_h$ with small $h$ is not much useful in reducing average TTR when the SUs frequently encounter PU-occupied channels, and instead, the MCTTR plays a more important role in determining average TTR for these cases.

For $N=12$, we plot the average TTR of FARCH, A-MOCH, ACH, ARCH, D-QCH and WFM in Fig.~\ref{fig:ATTR12}.
One can see that the comparison results are similar to those in the case of $N=11$.
As FARCH, ARCH, D-QCH and WFM have similar performance on $\text{MTTR}_h$ as shown in Fig.~\ref{fig:mTTR12}, they also enjoy nearly the same average TTR in a variety of channel conditions.

\subsection{Impact of the Number of Licensed Channels}
We are also interested in the average TTR performance of our proposed FARCH algorithm under different number of licensed channels.
Fig.~\ref{fig:ATTRN} shows the average TTR comparison among the six schemes with respect to various $N$.
We consider $N=10,15,\ldots,50$ with $X=0.8N$ and $p=0.4, 0.8$.
When $p=0.4$, we notice that the average TTR of FARCH, WFM, D-QCH and ARCH (only defined for even $N$) is nearly the same, around 70\% of that of A-MOCH or ACH for all $N$.
When $p=0.8$, the higher PU traffic increases this ratio to approximately 85\% for all $N$.
This observation confirms again that FARCH, WFM, D-QCH and ARCH spread out the rendezvous in time and channels more evenly, and $\text{MTTR}_h$ with large $h$ contributes more to average TTR when PU traffic becomes higher.

\begin{figure}
\begin{center}
  \includegraphics[height=3.2in,width=4.5in]{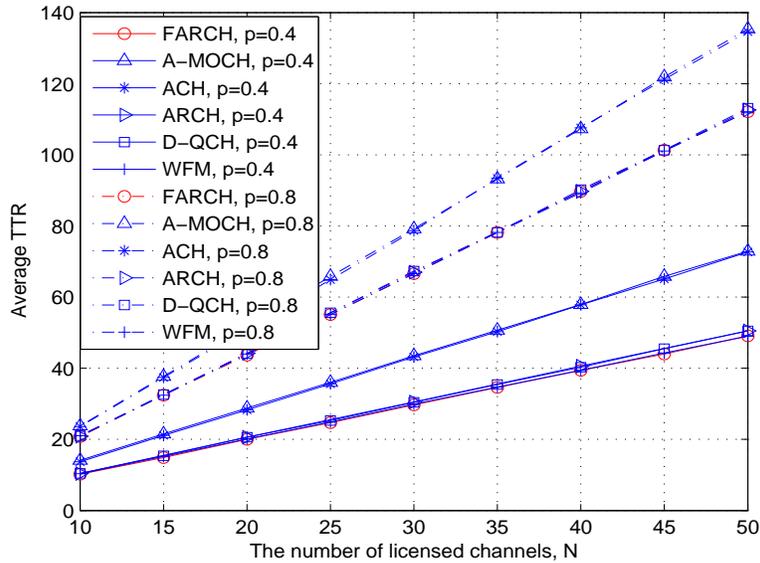}
\end{center}
\caption{Average TTR in the presence of PU signals with $N=10,15,\ldots,50$, and $X=0.8N$.}
\label{fig:ATTRN}
\end{figure}

Based on the above simulation results, we conclude that FARCH, ARCH, D-QCH and WFM have similar average TTR performance under a variety of channel conditions.
Nevertheless, it should be noted that the FARCH can  produce more distinct pairs of CH sequences than ARCH and WFM; and moreover, to the authors' best knowledge, FARCH is the only known algorithm that can achieve $MTTR=N$ for odd $N$.

\section{Conclusions}
In this paper, we focus on the design of sequence-based CH schemes that ensure maximal rendezvous diversity without time synchronization.
After deriving lower bounds on minimum MCTTR, MTTR of such CH sequences, we propose a new rendezvous algorithm: FARCH, with optimal MCTTR for any $N$ and the optimal MTTR when $N$ is odd.
However, there are still some possible extensions of our works that require further study.

(i) We conjecture that there does not exist asynchronous CH sequences with $MCTTR=N^2$ and $MTTR=N$ for even $N$.
It would be of interest to improve the current lower bound $MTTR \geq N$ to the best known achievable result $N+1$ in FARCH and WFM for even $N$.


(ii) We have introduced a new metric--$\text{MTTR}_h$, for a more comprehensive study on the worst-case TTR performance.
Another direction of our future work is to design an asynchronous CH algorithm with $\text{MTTR}_h$ as small as possible, with a reference to lower bounds derived in Theorem 7.

\end{document}